\crefname{equation}{}{}
\crefname{figure}{Figure}{Figures}
\crefname{table}{Table}{Tables}
\crefname{section}{Section}{Sections}
\newtheorem{theorem}{Theorem}[section]
\newtheorem{corollary}[theorem]{Corollary}
\theoremstyle{definition}
\theoremstyle{remark}
\numberwithin{equation}{section}
\begin{document}

\title[Exceptional points and defective resonances]{Exceptional points and defective resonances in an acoustic scattering system with sound-hard obstacles}

\author[K. Matsushima]{Kei Matsushima\,\orcidlink{0000-0002-0352-8770}}
\address{\parbox{\linewidth}{Kei Matsushima\\
  Graduate School of Engineering, University of Tokyo, 2–11–16 Yayoi, Bunkyo–ku, Tokyo 113–8656, Japan, \href{http://orcid.org/0000-0002-0352-8770}{orcid.org/0000-0002-0352-8770}}.}
  \email{matsushima@mid.t.u-tokyo.ac.jp}

\author[T. Yamada]{Takayuki Yamada\,\orcidlink{0000-0002-5349-6690}}
\address{\parbox{\linewidth}{Takayuki Yamada\\
  Graduate School of Engineering, University of Tokyo, 2–11–16 Yayoi, Bunkyo–ku, Tokyo 113–8656, Japan, \href{http://orcid.org/0000-0002-5349-6690}{orcid.org/0000-0002-5349-6690}}.}

\begin{abstract}
This paper is concerned with non-Hermitian degeneracy and exceptional points associated with resonances in an acoustic scattering problem with sound-hard obstacles. The aim is to find non-Hermitian degenerate (defective) resonances using numerical methods. To this end, we characterize resonances of the scattering problem as eigenvalues of a holomorphic integral operator-valued function. This allows us to define defective resonances and associated exceptional points based on the geometric and algebraic multiplicities. Based on the theory on holomorphic Fredholm operator-valued functions, we show fractional-order sensitivity of defective resonances with respect to operator perturbation. This property is particularly important in physics and associated with intriguing phenomena, e.g., enhanced sensing and dissipation. A defective resonance is sought based on the perturbation analysis and Nystr\"om discretization of the boundary integral equation. Numerical evidence of the existence of a defective resonance is provided. The numerical results combined with theoretical analysis provide a new insight into novel concepts in non-Hermitian physics.
\end{abstract}

\maketitle

\section{Introduction}
Resonance characterizes a number of physical phenomena. For instance, it is well known that the sound of musical instruments originates from resonance of acoustic waves. Resonance is also the key to understanding the physics of \textit{metamaterials} \cite{engheta2006metamaterials}, which exhibit anomalous acoustic, elastic, electromagnetic, or quantum properties when subjected to external sources.

Resonance is often referred to as a phenomenon that occurs when an excitation frequency matches a natural frequency of a system, e.g., wave amplification and phase shift. A natural frequency is defined as the rate at which a system oscillates in time without any external sources. This is often characterized by an eigenvalue of a matrix or differential operator, e.g., Hamiltonian or Laplacian, acting on wave fields propagating in space.

A natural frequency can be complex in general. This is the case when a system involves energy loss and/or gain, which forces the time-harmonic oscillation to grow or decay exponentially in time. The growth/decay rate is measured from the imaginary part of a natural frequency. From the mathematical point of view, this energy conservation is associated with the Hermiticity or self-adjointness of the underlying matrix or operator. 

The breakdown of energy conservation induces not only the temporal growth and decay but also many interesting phenomena, which are intensively studied in the name of non-Hermitian physics \cite{ashida2020non-hermitian}. For instance, the non-Hermiticity of a matrix implies the existence of \textit{exceptional points} \cite{kato1966perturbation}. In non-Hermitian physics, exceptional points are referred to as a system's parameters for which two or more eigenvalues and corresponding eigenmodes coalesce simultaneously. In other words, an exceptional point is a singularity in a parameter space at which a matrix has a degenerate eigenvalue with the geometric multiplicity less than the algebraic one. While exceptional points are closely related to other anomalies in non-Hermitian systems, including non-Hermitian skin effects \cite{yao2018edge,ammari2024mathematical,ammari2024stability,matsushima2024non-bloch}, one of the most straightforward and important applications of exceptional points is the enhancement of acoustic and optical sensors based on the fractional-power eigenvalue splitting, as we shall see in \cref{s:mechanical}.

Although the non-Hermiticity is essential for such anomalies, some physical models exhibit similar phenomena originating from usual self-adjoint operators. For instance, acoustic wave scattering in unbounded media is described by the Helmholtz equation and thus characterized by the spectrum of the Laplace operator, which is self-adjoint on $L^2$. This, at first glance, suggests that we cannot observe the unique phenomena such as exceptional points and non-Hermitian skin effects in the scattering system. However, an interesting point is that generalized eigenvalues, or \textit{resonances} \cite{lax1990scattering,hislop1996introduction,dyatlov2019mathematical}, of the Laplace operator induces analogous non-Hermitian effects on scattered fields. Physically, these phenomena are caused from energy loss due to radiation and observed in some classical wave systems \cite{wiersig2011nonorthogonal,kullig2018exceptional,yi2018pair,abdrabou2019exceptional,bulgakov2021exceptional,gwak2021rayleigh,matsushima2023exceptional,deguchi2024observation}. Recent studies in physics have discussed the mechanism and application of such anomalies, e.g. \cite{heiss2012physics,hodaei2017enhanced,chen2017exceptional,rechtsman2017optical,miri2019exceptional}, especially in systems with parity--time symmetry. These non-Hermitian effects have been observed in multi-material models with high-contrast properties, where subwavelength high-quality resonances can be found. 

The aim of this study is to find such non-Hermitian phenomena in a scattering system with only sound-hard obstacles characterized by the homogeneous Neumann boundary condition. This model is simpler than the multi-material systems; however the identification of non-Hermitian phenomena becomes much more challenging as the trivial subwavelength resonances are no longer available. To this end, we formulate the scattering problem using a boundary integral equation and layer potentials. This allows us to characterize resonances in a straightforward manner, i.e., eigenvalues of a holomorphic operator-valued function \cite{taylor2023partial,misawa2017boundary,steinbach2017combined,spiridonov2020mathematical,Shestopalov2023resonance}. In view of this, we give a mathematical definition of the non-Hermitian degeneracy and exceptional points associated with a holomorphic Fredholm operator-valued function on Banach spaces. Our ultimate goal is to find non-Hermitian degenerate (or \textit{defective}) resonances and associated exceptional points numerically. For this purpose, we perform a perturbation analysis of resonances and show that defective resonances are characterized by its fractional-power sensitivity in response to an operator perturbation. 

The rest of this paper is organized as follows. \Cref{s:mechanical} is devoted to a brief introduction to the concept of non-Hermitian degenerate (defective) eigenvalues and exceptional points. In \cref{s:exterior-neumann}, we characterize a resonance in an exterior scattering problem as an eigenvalue of a holomorphic family of boundary integral operators. This motivates us to define defective degeneracy of eigenvalues associated with a holomorphic Fredholm operator-valued function. Numerical methods for the computation of resonances are summarized in \cref{s:numerical}. We demonstrate some numerical evidence of the existence of a defective resonance in \cref{s:result}. \Cref{s:conclusion} concludes the paper.

\section{Defective eigenvalue and exceptional point in a discrete mechanical system}\label{s:mechanical}
In this section, we give an introductory example of defective eigenvalues and exceptional points.

\begin{figure}
    \centering
    \includegraphics[scale=0.45]{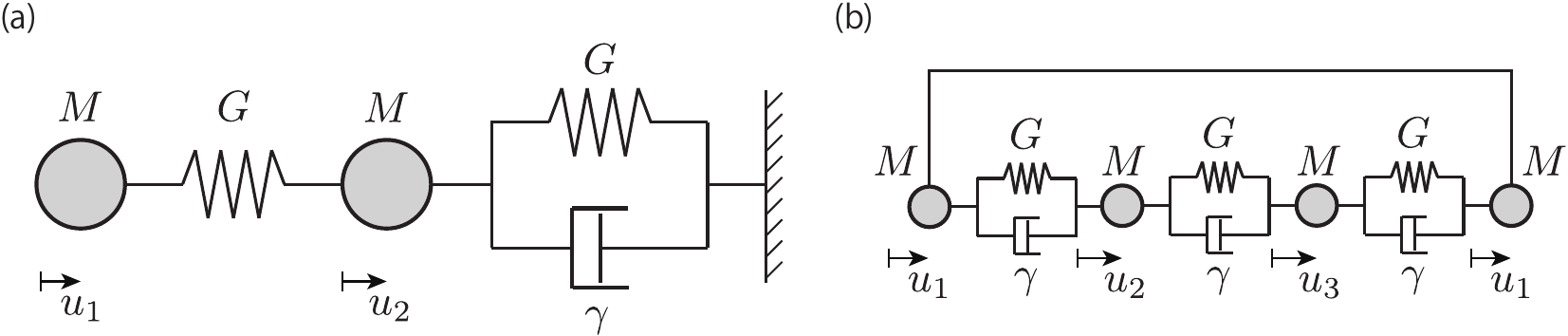}
    \caption{Discrete mechanical systems. The point masses, springs, and dampers are characterized by positive constants $M$, $G$, and $\gamma$, respectively.}
    \label{fig:discrete-1}
\end{figure}
Let us consider the simple mass-spring-damper model shown in \cref{fig:discrete-1} (a). The displacements $u_1$ and $u_2$ of the two point masses are governed by the following equation of motion:
\begin{align}
    \partial_t
    \begin{pmatrix}
        u_1(t)  \\ u_2(t) \\ M\dot u_1(t) \\ M\dot u_2(t)
    \end{pmatrix}
    = T
    \begin{pmatrix}
        u_1(t)  \\ u_2(t) \\ M\dot u_1(t) \\ M\dot u_2(t)
    \end{pmatrix}
    ,\quad 
    T
    :=
    \begin{bmatrix}
        0 & 0 & \frac{1}{M} & 0 
        \\
        0 & 0 & 0 & \frac{1}{M}
        \\
        - G &  G & 0 & 0
        \\
        G & -2G & 0 & -\frac{\gamma}{M}
    \end{bmatrix},
\label{eq:eom}
\end{align}
where $G$, $M$, and $\gamma$ are positive constants, and $\dot u_j:= \partial_t u_j$ denotes the velocity ($j=1,2$). A solution to the system of ordinary differential equations \cref{eq:eom} can be written using the matrix exponential $\mathrm{e}^{Tt}$ as
\begin{align*}
    \begin{pmatrix}
        u_1(t)  \\ u_2(t) \\ M\dot u_1(t) \\ M\dot u_2(t)
    \end{pmatrix}
    =
    \mathrm{e}^{Tt}
    \begin{pmatrix}
        u_1(0)  \\ u_2(0) \\ M\dot u_1(0) \\ M\dot u_2(0)
    \end{pmatrix}
    .
\end{align*}
Thus the time evolution is completely described by the eigenvalues and corresponding eigenvectors of the matrix $T$. Let $\lambda$ be an eigenvalue of $T$, i.e., 
\begin{align*}
    P_T(\lambda) := \mathrm{det}(\lambda I - T) = 0.
\end{align*}
The polynomial $P_T(\lambda)$ has common roots if and only if its discriminant vanishes, i.e., 
\begin{align*}
    0 = \mathrm{Res}(P_T,P_T^\prime) = \frac{G^3}{M^9} (\gamma^2 - 4GM)^2 (25GM - 4\gamma^2),
\end{align*}
where $\mathrm{Res}$ denotes the resultant. In view of this, we set $\gamma = 2\sqrt{GM}$. Then we have
\begin{align*}
    P_T(\lambda) = \left( \lambda - \lambda_1 \right)^2 \left( \lambda - \lambda_2 \right)^2,
    \quad
    \lambda_1 := \frac{-1-\mathrm i\sqrt{3}}{2}\omega_0,
    \quad
    \lambda_2 := \frac{-1+\mathrm i\sqrt{3}}{2}\omega_0
\end{align*}
with $\omega_0 := \sqrt{G/M}$. It is easy to see that their respective eigenspaces have dimension one with
\begin{align*}
    N(\lambda_1 I - {T}) =& \mathrm{span} \left\{ 
    \begin{pmatrix}
        \frac{1+\sqrt{3}\mathrm i}{2\sqrt{GM}},  \frac{-1+\sqrt{3}\mathrm i}{2\sqrt{GM}}, \frac{1-\sqrt{3}\mathrm i}{2},1
    \end{pmatrix}
    \right\},
\\
    N(\lambda_2 I - {T}) =& \mathrm{span} \left\{ 
    \begin{pmatrix}
        \frac{1-\sqrt{3}\mathrm i}{2\sqrt{GM}},  \frac{-1-\sqrt{3}\mathrm i}{2\sqrt{GM}}, \frac{1+\sqrt{3}\mathrm i}{2},1
    \end{pmatrix}
    \right\},
\end{align*}
where $N$ denotes the null space. In other words, the geometric multiplicities of $\lambda_1$ and $\lambda_2$ are one and thus less than their algebraic multiplicities. Thus, the matrix $T$ is not diagonalizable in this case. Instead, we have the following Jordan canonical form:
\begin{align*}
    T = U_T
    \begin{bmatrix}
        \lambda_1 & 1 & & 
        \\
         & \lambda_1 & &
        \\
         & & \lambda_2 & 1
         \\
         & & & \lambda_2
    \end{bmatrix}
    U_T^{-1},
    \quad
    U_T := 
    \begin{bmatrix}
        \frac{1+\sqrt{3}\mathrm i}{2\sqrt{GM}} & \frac{2+\sqrt{3}\mathrm i}{G} & \frac{1-\sqrt{3}\mathrm i}{2\sqrt{GM}} & \frac{2-\sqrt{3}\mathrm i}{G}
        \\
        \frac{-1+\sqrt{3}\mathrm i}{2\sqrt{GM}} & \frac{1+\sqrt{3}\mathrm i}{2G} & \frac{-1-\sqrt{3}\mathrm i}{2\sqrt{GM}} & \frac{1-\sqrt{3}\mathrm i}{2G}
        \\
        \frac{1-\sqrt{3}\mathrm i}{2} & \frac{1-\sqrt{3}\mathrm i}{\omega_0} & \frac{1+\sqrt{3}\mathrm i}{2} & \frac{1+\sqrt{3}\mathrm i}{\omega_0}
        \\
        1 & 0 & 1 & 0
    \end{bmatrix}
    .
\end{align*}
We say that an eigenvalue of a finite-dimensional square matrix is \textit{semi-simple} if its geometric and algebraic multiplicities coincide; otherwise it is called \textit{defective} or \textit{non-Hermitian degenerate}. It is obvious that any Hermitian matrix does not have defective (non-Hermitian degenerate) eigenvalues. 

Note that defective degeneracy should be distinguished from the usual semi-simple degeneracy, which often originates from a spatial symmetry in underlying physical models. To see this, let us consider the circular chain shown in \cref{fig:discrete-1} (b), which is described by the following equation of motion:
\begin{align*}
    \partial_t
    \begin{pmatrix}
        u_1(t) \\ u_2(t) \\ u_3(t) \\ M\dot u_1(t) \\ M\dot  u_2(t) \\ M\dot u_3(t)
    \end{pmatrix}
    =
    \begin{bmatrix}
        0 & 0 & 0 & \frac{1}{M} & 0 & 0
    \\
        0 & 0 & 0 & 0 & \frac{1}{M} & 0
    \\
        0 & 0 & 0 & 0 & 0 & \frac{1}{M} 
    \\
        -2G & G & G & -2\frac{\gamma}{M} & \frac{\gamma}{M} & \frac{\gamma}{M}
    \\
        G & -2G & G & \frac{\gamma}{M} & -2\frac{\gamma}{M} & \frac{\gamma}{M} 
    \\
        G & G & -2G & \frac{\gamma}{M} & \frac{\gamma}{M} & -2\frac{\gamma}{M}
    \end{bmatrix}
    \begin{pmatrix}
        u_1(t) \\ u_2(t) \\ u_3(t) \\ M\dot u_1(t) \\ M\dot  u_2(t) \\ M\dot u_3(t)
    \end{pmatrix}.
\end{align*}
The matrix on the right-hand side has three distinct eigenvalues
\begin{align*}
    \lambda_1 := 0,\quad \lambda_2 := \frac{-3\gamma -\mathrm i \sqrt{12MG - 9\gamma^2}}{2M}, \quad \lambda_3 := \frac{-3\gamma +\mathrm i \sqrt{12MG - 9\gamma^2}}{2M}.
\end{align*}
The zero eigenvalue $\lambda_1$ is defective and associated with rigid body motion with constant speed. The other eigenvalues $\lambda_2$ and $\lambda_3$ are both degenerate but have the same geometric and algebraic multiplicities of two if $12MG-9\gamma^2 \neq 0$, implying that they are semi-simple and not defective.

An important property of defective eigenvalues lies in its perturbation analysis. Let us consider again the four-by-four matrix $T$, defined in \cref{eq:eom}, and set $\gamma=2\sqrt{GM}$. We are interested in how a small perturbation on the matrix $T$ affects its eigenvalue distribution. For example, a straightforward calculation shows that the perturbed matrix $T_\varepsilon := T + \varepsilon B$ has four eigenvalues $(\lambda_1^+(\varepsilon),\lambda_1^-(\varepsilon),\lambda_2^+(\varepsilon),\lambda_2^-(\varepsilon))$ for each $\varepsilon\in\mathbb C$, given by
\begin{align*}
    \lambda^\pm_1(\varepsilon) &= \frac{\omega_0\left(-1\pm\sqrt{\varepsilon} - \mathrm i\sqrt{3-\varepsilon \pm 2\sqrt{\varepsilon}} \right)}{2} = \lambda_1 \pm \frac{\omega_0}{2}\left(1-\frac{\mathrm i}{\sqrt{3}} \right)\sqrt{\varepsilon} + O(\varepsilon),
    \\
    \lambda^\pm_2(\varepsilon) &= \frac{\omega_0\left(-1\pm\sqrt{\varepsilon} + \mathrm i\sqrt{3-\varepsilon \pm 2\sqrt{\varepsilon}} \right)}{2} = \lambda_2 \pm \frac{\omega_0}{2}\left(1+\frac{\mathrm i}{\sqrt{3}} \right)\sqrt{\varepsilon} + O(\varepsilon),
\end{align*}
where $B$ is the square matrix defined by
\begin{align*}
    B=
    \begin{bmatrix}
        0 & 0 & 0 & 1/M
    \\
        0 & 0 & 0 & 0
    \\
        0 & 0 & 0 & 0
    \\
        0 & 0 & 0 & 0
    \end{bmatrix}.
\end{align*}
This suggests that the original defective eigenvalues $\lambda_1$ and $\lambda_2$ split into the pairs $\{\lambda_1^+(\varepsilon),\lambda_1^-(\varepsilon)\}$ and $\{\lambda_2^+(\varepsilon),\lambda_2^-(\varepsilon)\}$, respectively, when $\varepsilon$ varies from zero. Note that such perturbation $B$ does not necessarily reflect the variation in the underlying physical model. We say that $\varepsilon_0\in\mathbb C$ is an \textit{exceptional point} of a family of square matrices $\{T_\varepsilon\}_{\varepsilon\in\mathbb C}$ if there exists a neighborhood $V$ of $\varepsilon_0$ and $m,n\in\mathbb N$ with $m>n$ such that
\begin{enumerate}
    \item the number of distinct eigenvalues of $T_{\varepsilon_0}$ is $n$ and
    \item the number of distinct eigenvalues of $T_{\varepsilon}$ is $m$ for all $\varepsilon\in V\setminus\{\varepsilon_0\}$.
\end{enumerate} 
In this example, the point $\varepsilon=0$ is an exceptional point of the family $\{ T+\varepsilon B\}_\varepsilon$.

These discussions on eigenvalues of (a family of) square matrices are based on analytic properties of the matrix-valued function $\lambda\mapsto \lambda I - T =:A(\lambda)$. We shall generalize these concepts, e.g., multiplicities, defective degeneracy, and exceptional points, to a class of operator-valued functions $A$ on infinite-dimensional Banach spaces.

\section{Resonances in exterior Neumann problem}\label{s:exterior-neumann}
Let $\Omega$ be an open and bounded subset of $\mathbb{R}^2$ with boundary $\partial\Omega$ of class $C^2$. We assume that the exterior $\mathbb R^2\setminus\overline\Omega$ is connected. For a given constant $k\in\mathbb{C}$ and function $g\in C(\partial\Omega)$, we consider the following exterior Neumann problem: find $u\in C^2(\mathbb R^2\setminus\overline\Omega) \cap C(\mathbb R^2\setminus\Omega)$ such that
\begin{align}
    \begin{cases}
    \displaystyle\varDelta u + k^2 u = 0 &\text{in }\mathbb{R}^2\setminus\overline\Omega, 
    \\
    \displaystyle\frac{\partial u}{\partial \nu} = g &\text{on }\partial\Omega, 
    \\
    \displaystyle\sqrt{r} \left( \frac{\partial u}{\partial r} - \mathrm ik u  \right) \to 0 &\text{uniformly as }r:=|x|\to+\infty,
    \end{cases}
    \label{eq:exterior-neumann}
\end{align}
where $\frac{\partial}{\partial\nu} := \nu\cdot\nabla$ denotes the normal derivative with unit normal vector $\nu$ outward to $\Omega$. Here we summarize well-known facts about the exterior Neumann problem. See \cite{colton2013integral,colton2013inverse,steinbach2017combined} for details. The exterior Neumann problem \cref{eq:exterior-neumann} is uniquely solvable for all $k\in\mathbb C$ with positive imaginary part, and the solution is given by
\begin{align}
    u(x) = \int_{\partial\Omega} \left( \frac{\partial \Phi}{\partial\nu(y)}(x,y;k) \varphi(y) - \Phi(x,y;k) g(y) \right)\mathrm ds(y) \quad\text{for all }x\in\mathbb R^2\setminus\overline\Omega, \label{eq:sol-formula}
\end{align}
where $\Phi$ is the fundamental solution of the two-dimensional Helmholtz equation, given by $\Phi(x,y;k)=\frac{\mathrm i}{4}H^{(1)}_0(k|x-y|)$. The density $\varphi\in C(\partial\Omega)$ is a unique solution to the following boundary integral equation:
\begin{align}
    (I - D_k)\varphi = S_k g, \label{eq:bie}
\end{align}
where $I$ is the identity operator, $S_k:C(\partial\Omega)\to C(\partial\Omega)$ and $D_k:C(\partial\Omega)\to C(\partial\Omega)$ are the boundary integral operators defined by
\begin{align*}
    (S_k \varphi)(x) &= 2\int_{\partial\Omega} \Phi(x,y;k)\varphi(y)\mathrm ds(y) ,
    \\
    (D_k \varphi)(x) &= 2\int_{\partial\Omega} \frac{\partial \Phi}{\partial\nu(y)}(x,y;k)\varphi(y)\mathrm ds(y) 
\end{align*}
for all $\varphi\in C(\partial\Omega)$ and $x\in\partial\Omega$. The linear operators $S_k$ and $D_k$ on $C(\partial\Omega)$ are compact. Furthermore, we observe that the mapping $k\mapsto S_k$ and $k\mapsto D_k$ from the upper half-plane onto $\mathcal L(C(\partial\Omega),C(\partial\Omega))$, the complex Banach space of all bounded linear operators on $C(\partial\Omega)$, is holomorphic. They have analytic continuation from the upper half-plane to the lower half-plane through the positive real axis. Using the continuation, we pose the same boundary integral equation \cref{eq:bie} for $k\in\mathbb C$ with negative imaginary part. We are interested in the case where the equation is not uniquely solvable, i.e., the values of $k\in\Lambda := \{ z\in\mathbb C : \mathrm{Im}[z]<0 \}$ for which $I-D_k$ is not invertible. By the compactness of $D_k$, this is equivalent that $I-D_k$ is not injective. We say that $k\in\Lambda$ is a \textit{(scattering) resonance} of the exterior Neumann problem if $N(I-D_k)$, the null space of $I-D_k$, is not trivial. For every element $\varphi\neq 0$ in $N(I-D_k)$ the function \cref{eq:sol-formula} solves the Helmholtz equation with the homogeneous Neumann boundary condition. This function on $\mathbb R^2\setminus\overline\Omega$ is called a \textit{resonant state} of the exterior Neumann problem.

\subsection{Review of the theory on holomorphic Fredholm operator-valued functions}
We have seen that resonances are characterized by the operator-valued function $k\mapsto I-D_k$ with a compact operator $D_k$. This motivates us to introduce the following theory on holomorphic Fredholm operator-valued functions. Further details can be found in, e.g., \cite{kozlov1999differential}.

Let $\Lambda$ be an open and connected subset of $\mathbb C$ and let $X$ and $Y$ be complex Banach spaces. An operator-valued function $A:\Lambda\to \mathcal L(X,Y)$ is called \textit{holomorphic} if it is differentiable in the norm at every point in $\Lambda$. We are interested in the subset $\sigma(A) := \{k\in\Lambda: A(k)\text{ is not invertible} \}$, called the \textit{spectrum} of $A$. In particular, $k\in\sigma(A)$ is called an \textit{eigenvalue} of $A$ if $A(k)$ is not injective, i.e., the \textit{eigenspace} $N(A(k))$ is not trivial. The dimension of an eigenspace is called the \textit{geometric multiplicity}. For each eigenvalue $k\in\sigma(A)$, a nonzero vector in $N(A(k))$ is called an \textit{eigenvector} of $A$ associated with $k$. 

In particular, we say that $A$ is a \textit{holomorphic Fredholm operator-valued function} if 
\begin{enumerate}
    \item $A$ is holomorphic,
    \item $A(k):X\to Y$ is a Fredholm operator for all $k\in\Lambda$, and
    \item there exists at least one $k\in\Lambda$ such that $A(k)$ is invertible.
\end{enumerate}
 In what follows, we always assume that $A$ is a holomorphic Fredholm operator-valued function. The analytic Fredholm theorem ensures that $\sigma(A)$ is a discrete subset of $\Lambda$ and the index of $A(k)$ is zero for all $k\in\Lambda$. The spectrum $\sigma(A)$ consists of all eigenvalues of $A$ and their respective eigenspace $N(A(k))$ is finite-dimensional, i.e., their geometric multiplicities are always finite.

The algebraic multiplicity will be defined via Jordan chains. An ordered collection $(\varphi_0,\varphi_1,\ldots,\varphi_{n-1})$ in $X$ is called a Jordan chain of $A$ associated with an eigenvalue $k\in\sigma(A)$ if $\varphi_0 \neq 0$ and 
\begin{align*}
    \sum_{l=0}^j \frac{1}{l!}A^{(l)}(k)\varphi_{j-l} = 0 \quad \text{for all }j=0,\ldots,n-1,
\end{align*}
where $A^{(l)}(k)$ denotes the $l$th-order derivative of $A$ at $k$. An element in $X$ is called a \textit{generalized eigenvector} of $A$ associated with $k\in\sigma(A)$ if it belongs to a Jordan chain at $k$. The algebraic multiplicity of an eigenvalue $k\in\sigma(A)$ is defined as the dimension of the linear hull spanned by all the generalized eigenvectors associated with $k$. 

An eigenvector is also a generalized eigenvector by definition; thus the algebraic multiplicity is greater than or equal to the geometric multiplicity. Furthermore, it can be seen that the algebraic multiplicity is always finite. An eigenvalue $k\in\sigma(A)$ is said to be \textit{semi-simple} if its geometric and algebraic multiplicities coincide; otherwise it is called \textit{defective}.

\subsection{Characterization of defective eigenvalues via perturbation analysis}\label{ss:characterization}
In terms of the above definitions, our aim is to numerically identify some defective eigenvalues of $k\mapsto I-D_k$. While it is easy to give concrete examples of semi-simple eigenvalues (resonances) of $k\mapsto I-D_k$ using some simple geometries, it is an open question whether a resonance of the exterior Neumann problem can be defective. 

Later we shall identify a defective resonance based on the following result \cite[Theorem 1]{stepin2006fredholm}, which is restated in our Banach space setting. 

\begin{theorem}\label{thm:main}
    Let $\Lambda$ be an open and connected subset of $\mathbb C$, let $X$ and $Y$ be complex Banach spaces, let $A:\Lambda\to\mathcal L(X,Y)$ be a holomorphic Fredholm operator-valued function, let $k_0\in\Lambda$ be an eigenvalue of $A$, and let $B\in\mathcal L(X,Y)$ be a nonzero operator.
    Then there exist open neighborhoods $V$ of $0\in\mathbb C$ and $\Lambda_0$ of $k_0$ and a matrix-valued function $F:V\times \Lambda_0\to \mathcal L(\mathbb C^m,\mathbb C^m)$, where $m$ is the geometric multiplicity of the eigenvalue $k_0$, such that 
    \begin{enumerate}
        \item for every $\varepsilon\in V$, $k\mapsto F(\varepsilon,k)$ is holomorphic on $\Lambda_0$,
        \item for every $k\in \Lambda_0$, $\varepsilon\mapsto F(\varepsilon,k)$ is holomorphic on $V$,
        \item for every $\varepsilon\in V$, $k\in\Lambda_0$ is an eigenvalue of $A+\varepsilon B$ if and only if $k$ is an eigenvalue of $k\mapsto F(\varepsilon,k)$, 
        \item the geometric multiplicities of the eigenvalue $k_0$ of $A$ and $k\mapsto F(0,k)$ coincide, and
        \item the algebraic multiplicities of the eigenvalue $k_0$ of $A$ and $k\mapsto F(0,k)$ coincide.
    \end{enumerate}
\end{theorem}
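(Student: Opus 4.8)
The plan is to reduce the eigenvalue problem for $A+\varepsilon B$ near $k_0$ to a finite-dimensional one by a Lyapunov--Schmidt (Schur complement) construction, and then to verify that at $\varepsilon=0$ this reduction leaves both the geometric and algebraic multiplicities unchanged. First I would exploit the Fredholm structure. Since $A(k_0)$ is Fredholm of index zero (the analytic Fredholm theorem, recorded above, gives index zero throughout $\Lambda$), the kernel $N(A(k_0))$ is $m$-dimensional and the range $R(A(k_0))$ is closed of codimension $m$. Hence I can split $X = N(A(k_0))\oplus X_1$ and $Y = Y_1 \oplus R(A(k_0))$ with $X_1$ closed and $\dim Y_1 = m$, and let $P\in\mathcal L(X,X)$, $Q\in\mathcal L(Y,Y)$ be the associated projections onto $N(A(k_0))$ and $Y_1$. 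Writing $A(k)+\varepsilon B$ as a $2\times 2$ operator matrix with respect to these splittings, its $(2,2)$-block $A_{22}(\varepsilon,k):=(I-Q)(A(k)+\varepsilon B)|_{X_1}\colon X_1\to R(A(k_0))$ equals $A(k_0)|_{X_1}$ at $(\varepsilon,k)=(0,k_0)$, which is a Banach space isomorphism onto $R(A(k_0))$. As the isomorphisms form an open set and the inverse depends jointly holomorphically on the operator via a Neumann series, $A_{22}(\varepsilon,k)$ stays invertible on a neighborhood $V\times\Lambda_0$ of $(0,k_0)$, with holomorphic inverse.

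On $V\times\Lambda_0$ I would then define the Schur complement
\[
F(\varepsilon,k):=A_{11}(\varepsilon,k)-A_{12}(\varepsilon,k)\,A_{22}(\varepsilon,k)^{-1}\,A_{21}(\varepsilon,k),
\]
where $A_{11},A_{12},A_{21}$ are the remaining blocks of $A(k)+\varepsilon B$; after fixing bases of the $m$-dimensional spaces $N(A(k_0))$ and $Y_1$ this is a matrix-valued function $V\times\Lambda_0\to\mathcal L(\mathbb C^m,\mathbb C^m)$. Separate holomorphy in $k$ and in $\varepsilon$ (properties (1)--(2)) is immediate from the holomorphy of $A$, of $A_{22}^{-1}$, and of the products. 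Because the $(2,2)$-block is invertible, the standard triangular factorization
\[
A(k)+\varepsilon B = E(\varepsilon,k)\,\mathrm{diag}\!\left(F(\varepsilon,k),\,A_{22}(\varepsilon,k)\right)\,G(\varepsilon,k)
\]
with invertible holomorphic factors $E,G$ shows that $(A(k)+\varepsilon B)\varphi=0$ admits a nontrivial solution if and only if $F(\varepsilon,k)$ is singular; this yields property (3), after shrinking $\Lambda_0$ so that $k_0$ is the only eigenvalue of $A$ inside it, which is possible by discreteness of $\sigma(A)$.

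For (4)--(5) I would specialize to $\varepsilon=0$. Evaluating at $k_0$, every $\varphi_0\in N(A(k_0))$ satisfies $A_{11}(0,k_0)\varphi_0 = QA(k_0)\varphi_0 = 0$ and $A_{21}(0,k_0)\varphi_0 = (I-Q)A(k_0)\varphi_0 = 0$, so $F(0,k_0)=0$; hence $\dim N(F(0,k_0))=m$, which is (4). For the algebraic multiplicity I would invoke the block factorization at $\varepsilon=0$: it exhibits $A(k)$ as equivalent, on a neighborhood of $k_0$, to the block-diagonal family $\mathrm{diag}(F(0,k),A_{22}(0,k))$ through invertible holomorphic transformations $E(0,\cdot),G(0,\cdot)$. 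The step I expect to be the crux is the invariance of the algebraic multiplicity --- equivalently, of the full collection of partial multiplicities --- under such equivalence transformations. Concretely, one must check that left/right multiplication by $E(0,k)$ and $G(0,k)$ induces a length-preserving bijection between Jordan chains of $A$ at $k_0$ and those of the middle factor, and that the block $A_{22}(0,k)$, being an isomorphism at $k_0$, contributes no generalized eigenvectors. This is exactly the Gohberg--Sigal-type invariance underlying the theory of holomorphic Fredholm operator-valued functions \cite{kozlov1999differential}, and applying it gives that the algebraic multiplicity of $k_0$ for $A$ equals that for $F(0,\cdot)$, which is (5). I would therefore spend most of the write-up making the Jordan-chain correspondence under equivalence explicit, since every other step is a routine consequence of the Fredholm and holomorphy hypotheses.
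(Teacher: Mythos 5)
Your proof is correct, but it takes a genuinely different route from the paper's. You run the classical Lyapunov--Schmidt/Schur-complement reduction: split $X=N(A(k_0))\oplus X_1$ and $Y=Y_1\oplus R(A(k_0))$, invert only the corner block $A_{22}$, and let $F$ be the Schur complement, with the block LDU factorization supplying the holomorphic equivalence. The paper instead follows \cite{stepin2006fredholm} and \cite{kozlov1999differential}: it chooses a biorthogonal system $\langle\psi_i,u_j\rangle=\delta_{ij}$ (via \cite[Lemma~4.14]{kress2014linear}) and vectors $\varphi_1,\ldots,\varphi_m$ spanning a complement of the range, forms the finite-rank operator $P\varphi=\sum_{j}\langle\psi_j,\varphi\rangle\varphi_j$ so that $A(k_0)+P$ --- and hence $A(k)+\varepsilon B+P$ for $(\varepsilon,k)$ near $(0,k_0)$, by two Neumann series --- is invertible, and then factors $A+\varepsilon B$ into invertible holomorphic factors times an operator conjugate, via $L:X\to U\times\mathbb C^m$, to $\mathrm{diag}(I_U,F(\varepsilon,k))$ with $F_{ij}(\varepsilon,k)=\delta_{ij}-\langle\psi_i,(A(k)+\varepsilon B+P)^{-1}\varphi_j\rangle$. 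Both routes terminate in exactly the lemma you single out as the crux --- invariance of geometric and algebraic multiplicities under multiplication by invertible holomorphic operator functions --- which the paper does not re-prove but cites as \cite[Proposition~A.5.1]{kozlov1999differential}; your plan to write out the Jordan-chain correspondence would therefore make the argument more self-contained, not less. The trade-offs: your construction gives item (4) immediately from the computation $F(0,k_0)=0$ and only requires inverting a corner block, but its middle factor is $\mathrm{diag}(F,A_{22})$ rather than $\mathrm{diag}(F,I_U)$, so you need the additional (easy, and duly flagged) check that the invertible block $A_{22}(0,\cdot)$ contributes no Jordan chains at $k_0$, whereas in the paper the infinite-dimensional slot is already the identity. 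Conversely, the paper's regularization buys explicit quantitative neighborhoods (the radius $1/(2M)$ from the Neumann-series estimate) and an $F$ expressed through the duality pairing and the resolvent-type operator $(A+\varepsilon B+P)^{-1}$, a form suited to the boundary-integral setting of the rest of the paper, where the pairing is the integral bilinear form on $C(\partial\Omega)$. One small nit: shrinking $\Lambda_0$ so that $k_0$ is the only eigenvalue of $A$ inside it is unnecessary for item (3), since the equivalence holds pointwise wherever the factorization exists.
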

\begin{proof}
    Let $X^\prime$ be a complex Banach space with bounded and nondegenerate bilinear (or sesquilinear) form $\langle\cdot,\cdot\rangle:X^\prime\times X\to\mathbb C$ and let $u_1,\ldots,u_m$ be a basis of $N(A(k_0))$. Then there exist linearly independent vectors $\psi_1,\ldots,\psi_m\in X^\prime$ such that $\langle \psi_i,u_j\rangle = \delta_{ij}$ \cite[Lemma~4.14]{kress2014linear}. In addition, let $\varphi_1,\ldots,\varphi_m\in Y$ be linearly independent modulo $A(k_0)(X)$, where $A(k_0)(X)$ denotes the range of $A(k_0):X\to Y$. We define a compact operator $P:X\to Y$ by 
    \begin{align*}
        P\varphi = \sum_{j=1}^m \langle \psi_j,\varphi\rangle \varphi_j \quad\text{for all }\varphi\in X.
    \end{align*}
    
    We first show that there exist open neighborhoods $V$ of $0\in\mathbb C$ and $\Lambda_0$ of $k_0$ such that for every $\varepsilon\in V$ the operator $A(k)+\varepsilon B + P$ is invertible for all $k\in \Lambda_0$ and $k\mapsto A(k)+\varepsilon B$ is a holomorphic Fredholm operator-valued function on $\Lambda_0$. By the definition of $P$, the Fredholm operator $A(k_0)+P$ is invertible \cite[\S A.8]{kozlov1999differential}. Moreover, the Neumann series shows that $I+\varepsilon(A(k_0)+P)^{-1}B$ is invertible for all $\varepsilon\in \mathcal B(0;1/M)$, where $\mathcal B(c;r)$ denotes the open ball of radius $r>0$ centered at $c$ and $M:=\| (A(k_0)+P)^{-1} B\|> 0$, with estimate
    \begin{align}
        \| (I+\varepsilon(A(k_0)+P)^{-1}B)^{-1} \| \leq \frac{1}{1-M|\varepsilon|} \quad\text{for all } \varepsilon\in \mathcal B(0;1/M). \label{eq:est}
    \end{align}
    It is easy to check that $[I+\varepsilon(A(k_0)+P)^{-1} B]^{-1}(A(k_0)+P)^{-1}$ is the inverse of $A(k_0)+\varepsilon B+P$ for all $\varepsilon\in \mathcal B(0;1/M)$. Moreover, the estimate \cref{eq:est} shows that the mapping $\varepsilon\mapsto (A(k_0)+\varepsilon B+P)^{-1}$ is uniformly bounded on $\mathcal B(0;1/(2M))=:V_1$, i.e., there exists a constant $C>0$ such that $\| (A(k_0)+\varepsilon B+P)^{-1} \| \leq C$ for all $\varepsilon\in V_1$. 
    In combination with the continuity of $k\mapsto A(k)$, this implies that
    \begin{align*}
        \| (A(k)-A(k_0)) (A(k_0)+\varepsilon B+P)^{-1} \| < 1
    \end{align*}
    for all $\varepsilon\in V_1$ and $k$ in an open neighborhood $\Lambda_0$ of $k_0$. Thus the Neumann series again shows that $I + (A(k)-A(k_0)) (A(k_0)+\varepsilon B+P)^{-1}$ is invertible for all $\varepsilon\in V_1$ and $k\in \Lambda_0$. A straightforward calculation shows that 
    \begin{align*}
        (A(k_0)+\varepsilon B + P)^{-1} [I + (A(k)-A(k_0)) (A(k_0)+\varepsilon B+P)^{-1}]^{-1}
    \end{align*}
    is the inverse of $A(k)+\varepsilon B + P$.

    Since both the sets of invertible operators and Fredholm operators of index zero are open in $\mathcal L(X,Y)$ \cite[Theorem 4.1]{gohberg1990classes}, there exists an open neighborhood $V\subset V_1$ of $0\in\mathbb C$ such that for every $\varepsilon\in V$ the mapping $k\mapsto A(k)+\varepsilon B$ is a holomorphic Fredholm operator-valued function on $\Lambda_0$.

    The rest of the proof is analogous to \cite[Lemma~A.8.2]{kozlov1999differential}. Let $Q$ be the projection operator from $X$ onto $N(A(k_0))$, given by
    \begin{align*}
        Q\varphi = \sum_{j=1}^m \langle \psi_j,\varphi \rangle u_j \quad\text{for all }\varphi\in X.
    \end{align*}
    In addition, we define a bounded linear operator $L:X\to U\times\mathbb C^m$ with $U:=(I-Q)(X)$ by
    \begin{align*}
        L\varphi =
        \begin{pmatrix}
            (I-Q)\varphi \\ \langle\psi_1,\varphi \rangle \\ \vdots \\ \langle\psi_m,\varphi \rangle
        \end{pmatrix}
        \quad\text{for all }\varphi\in X.
    \end{align*}
    Then the operator $L$ has the inverse given by
    \begin{align*}
        L^{-1} \begin{pmatrix}
            w \\ \xi
        \end{pmatrix}
        = w + \sum_{j=1}^m \xi_j u_j \quad\text{for all }(w,\xi)\in U\times \mathbb C^m
    \end{align*}
    since $U\subset N(Q)$, $N(A(k_0))\subset N(I-Q)$, and $\langle \psi_j,w\rangle = 0$ for all $w\in U$ and $j=1,\ldots,m$.
    
    Let $\varepsilon\in V$ be fixed. It is easy to check that $I - (I-Q)(A(k_0)+\varepsilon B + P)^{-1} P$ is invertible with its inverse being $I + (I-Q)(A(k_0)+\varepsilon B + P)^{-1} P$ since $P(I-Q)=0$. Moreover, for every $k\in \Lambda_0$ we have
    \begin{align*}
        &\quad [I-Q(A(k)+\varepsilon B+P)^{-1}P] [I-(I-Q)(A(k)+\varepsilon B+P)^{-1}P] 
    \\
        &= I - (A(k)+\varepsilon B+P)^{-1}P.
    \end{align*}
    This implies that
    \begin{align*}
        &\quad (A(k)+\varepsilon B + P)L^{-1} L \left[I-Q(A(k)+\varepsilon B+P)^{-1}P \right]L^{-1}
    \\
        &\qquad\qquad \times L\left[I-(I-Q)(A(k)+\varepsilon B+P)^{-1}P\right] 
        = A(k)+\varepsilon B
    \end{align*}
    for all $k\in \Lambda_0$. Since $k\mapsto (A(k)+\varepsilon B + P)L^{-1}$ and $k\mapsto L[I-(I-Q)(A(k)+\varepsilon B+P)^{-1}P]$ are holomorphic on $\Lambda_0$ and the operators are invertible for all $k\in\Lambda_0$, the value $k\in\Lambda_0$ is an eigenvalue of the holomorphic Fredholm operator-valued function $A+\varepsilon B$ if and only if $k$ is an eigenvalue of $L[I-Q(A+\varepsilon B+P)^{-1}P]L^{-1}=:\tilde A_\varepsilon$. Moreover, the geometric and algebraic multiplicities of the eigenvalue of $A+\varepsilon B$ and $\tilde A_\varepsilon$ are the same, respectively \cite[Proposition~A.5.1]{kozlov1999differential}. 
    
    From the definition of $L$ with the properties $(I-Q)Q=0$ and $\langle \psi_j,Q\varphi\rangle = \langle \psi_j,\varphi\rangle$ for all $j=1,\ldots,m$, we have
    \begin{align*}
        \tilde A_\varepsilon(k)
        \begin{pmatrix}
            w \\ \xi
        \end{pmatrix}
        =
        \begin{pmatrix}
            w \\ F(\varepsilon,k)\xi
        \end{pmatrix}
         \quad\text{for all }(w,\xi)\in U\times \mathbb C^m \text{ and }k\in\Lambda_0,
    \end{align*}
    where $F(\varepsilon,k):\mathbb C^m\to\mathbb C^m$ is defined by
    \begin{align*}
        F(\varepsilon,k)\xi = \xi - 
        \begin{pmatrix}
            \displaystyle \sum_{j=1}^m \langle \psi_1,(A(k)+\varepsilon B+P)^{-1}\varphi_j\rangle \xi_j
            \\
            \vdots
            \\
            \displaystyle \sum_{j=1}^m \langle \psi_m,(A(k)+\varepsilon B+P)^{-1}\varphi_j\rangle \xi_j
        \end{pmatrix}
        \quad\text{for all }\xi\in\mathbb C^m.
    \end{align*}
    This implies that the invertibility of $\tilde A_\varepsilon(k)$ and $F(\varepsilon,k)$ is also the same, i.e., $k\in\Lambda_0$ is an eigenvalue of $\tilde A_\varepsilon$ if and only if it is an eigenvalue of $k\mapsto F(\varepsilon,k)$. Moreover, their geometric and algebraic multiplicities are the same. It is clear that $F:V\times\Lambda_0\to\mathbb \mathcal L(\mathbb C^m,\mathbb C^m)$ is holomorphic in each variable.
\end{proof}

In the special case of $m=1$, the matrix-valued function $F$ is just a complex-valued function in two complex variables whose order of vanishing at $(0,k_0)$ is the algebraic multiplicity $p$ of the eigenvalue $k_0$. This allows us to use the Weierstrass preparation theorem (e.g. \cite[Theorem~6.2.3]{lebl2024tasty}) as summarized below.
\begin{corollary}
    Under the assumptions of \Cref{thm:main} with $m=1$, there exist neighborhoods $V$ of $0\in\mathbb C$ and $\Lambda_0$ of $k_0$ and holomorphic functions $c_j:V\to\mathbb C$ ($j=0,\ldots,p-1$) such that
    \begin{enumerate}
        \item $c_0(0)=c_1(0)=\ldots=c_{p-1}(0)=0$,
        \item for every $\varepsilon\in V$, the polynomial
        \begin{align*}
            k\mapsto W(\varepsilon,k):= (k-k_0)^p + c_{p-1}(\varepsilon)(k-k_0)^{p-1} + \ldots + c_1(\varepsilon) (k-k_0) + c_0(\varepsilon),
        \end{align*}
        has $p$ zeros (counting multiplicity) in $\Lambda_0$, and
        \item for every $\varepsilon\in V$, $k\in \Lambda_0$ is an eigenvalue of $A+\varepsilon B$ if and only if $k\in \Lambda_0$ is a zero of the polynomial $k\mapsto W(\varepsilon,k)$, 
    \end{enumerate}
    where $p$ is the algebraic multiplicity of the eigenvalue $k_0$ of $A$. 
\end{corollary}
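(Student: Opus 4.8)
The plan is to reduce the statement to a single application of the Weierstrass preparation theorem. First I would set $f(\varepsilon,k):=F(\varepsilon,k)$, which is a scalar since $m=1$. By \Cref{thm:main}(1)--(2) the function $f$ is holomorphic in each variable separately on $V\times\Lambda_0$; since it is moreover continuous (being assembled from the resolvent $(A(k)+\varepsilon B+P)^{-1}$ through the bounded form $\langle\cdot,\cdot\rangle$ in the explicit formula for $F$), Osgood's lemma --- or Hartogs' theorem directly --- upgrades this to joint holomorphy of $f$ on $V\times\Lambda_0$. This is the one point where the separate-holomorphy conclusion of \Cref{thm:main} must be strengthened before any two-variable function theory can be invoked.

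Next I would pin down the order of vanishing of $k\mapsto f(0,k)$ at $k_0$. For a scalar holomorphic function the Jordan-chain conditions $\sum_{l=0}^{j}\frac{1}{l!}f^{(l)}(0,k_0)\varphi_{j-l}=0$ collapse to $f^{(l)}(0,k_0)=0$ for $l=0,\ldots,n-1$, so a Jordan chain of length $n$ exists precisely when $f(0,\cdot)$ vanishes to order at least $n$ at $k_0$. Hence the algebraic multiplicity of the eigenvalue $k_0$ of $F(0,\cdot)$ coincides with the order of this zero. By \Cref{thm:main}(5) that order is exactly $p$, so $f(0,\cdot)$ has a zero of order $p$ at $k_0$, i.e. $f$ is regular of order $p$ in the variable $w:=k-k_0$ at $(0,k_0)$.

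With this regularity secured, the Weierstrass preparation theorem \cite[Theorem~6.2.3]{lebl2024tasty}, applied in the variable $w=k-k_0$, yields after shrinking $V$ and $\Lambda_0$ to a polydisc a factorization $f(\varepsilon,k)=u(\varepsilon,k)\,W(\varepsilon,k)$, where $u$ is holomorphic and nonvanishing on $V\times\Lambda_0$ and $W(\varepsilon,k)=(k-k_0)^p+c_{p-1}(\varepsilon)(k-k_0)^{p-1}+\ldots+c_0(\varepsilon)$ is a Weierstrass polynomial with each $c_j$ holomorphic on $V$. The theorem further guarantees that, for every $\varepsilon\in V$, the $p$ roots of $W(\varepsilon,\cdot)$ all lie in $\Lambda_0$ and exhaust the zeros of $f(\varepsilon,\cdot)$ there, which gives claim~(2). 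Specializing to $\varepsilon=0$ and using $f(0,k)=u(0,k)(k-k_0)^p$, the uniqueness clause of the preparation theorem forces $W(0,k)=(k-k_0)^p$, i.e. $c_0(0)=\ldots=c_{p-1}(0)=0$, establishing claim~(1).

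Finally, claim~(3) follows by chaining equivalences: by \Cref{thm:main}(3) a point $k\in\Lambda_0$ is an eigenvalue of $A+\varepsilon B$ if and only if it is an eigenvalue of $F(\varepsilon,\cdot)$; for a scalar function this means $f(\varepsilon,k)=0$, and since $u$ is nonvanishing on $V\times\Lambda_0$ this is in turn equivalent to $W(\varepsilon,k)=0$. I expect the main obstacle to be purely the verification of the Weierstrass hypothesis --- the joint-holomorphy upgrade together with the identification of the algebraic multiplicity with the order of the zero of $f(0,\cdot)$; once these are in place, the three conclusions are read off directly from the preparation theorem, the only remaining care being the bookkeeping of shrinking $V$ and $\Lambda_0$ so that the root-counting in $\Lambda_0$ is valid.
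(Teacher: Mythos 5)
Your proof is correct and follows essentially the same route as the paper: the paper likewise treats $F$ (for $m=1$) as a scalar function of two complex variables whose order of vanishing in $k$ at $(0,k_0)$ equals the algebraic multiplicity $p$, and then reads off all three claims from the Weierstrass preparation theorem \cite[Theorem~6.2.3]{lebl2024tasty}. The two points you single out for verification --- upgrading the separate holomorphy of \Cref{thm:main} to joint holomorphy (via Hartogs) and identifying the algebraic multiplicity of a scalar holomorphic function with the order of its zero through the collapsed Jordan-chain conditions --- are precisely the steps the paper leaves implicit, so your write-up is a faithful, more detailed rendering of the paper's argument.
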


The above results reveal how a perturbation on a holomorphic Fredholm operator-valued function changes their eigenvalue distribution locally. For example, suppose that a holomorphic Fredholm operator-valued function $A:\Lambda\to\mathcal L(X,Y)$ has a simple eigenvalue at $k_0$, i.e., both the geometric and algebraic multiplicities of $k_0$ are one. Then for every $B\in\mathcal L(X,Y)$, an eigenvalue of the perturbed function $A+\varepsilon B$ in a neighborhood of $k_0$ can be identified as a zero of $k\mapsto k - k_0 + c_0(\varepsilon)$, where $c_0$ is a holomorphic function with $c_0(0)=0$, for sufficiently small $\varepsilon\in\mathbb C$. This implies that the eigenvalue moves continuously in the neighborhood in response to the small change of $\varepsilon$.

Let us consider a defective eigenvalue $k_0$ of $A:\Lambda\to\mathcal L(X,Y)$ with geometric multiplicity of one and algebraic multiplicity of two. Then for sufficiently small $\varepsilon$, the perturbed function $A+\varepsilon B$ has eigenvalues at the zeros of $k\mapsto (k-k_0)^2 + c_1(\varepsilon)(k-k_0) + c_0(\varepsilon)$, where $c_0$ and $c_1$ are holomorphic functions with $c_0(0)=c_1(0)=0$. The zeros are explicitly written as
\begin{align*}
    k^{(1)}(\varepsilon) &:= k_0 + \frac{-c_1(\varepsilon) + \sqrt{(c_1(\varepsilon))^2 - 4c_0(\varepsilon) }}{2},
\\
    k^{(2)}(\varepsilon) &:= k_0 + \frac{-c_1(\varepsilon) - \sqrt{(c_1(\varepsilon))^2 - 4c_0(\varepsilon) }}{2}.
\end{align*}
This means that the defective eigenvalue $k_0$ of $A$ splits into two eigenvalues $k^{(1)}(\varepsilon)$ and $k^{(2)}(\varepsilon)$ with $k^{(1)}(\varepsilon)\neq k^{(2)}(\varepsilon)$ for all $\varepsilon\neq 0$ in the neighborhood, provided that $c_1^2 - 4c_0$ is not identically zero. This behavior is schematically illustrated in \cref{fig:traj}. In other words, arbitrary small perturbation $\varepsilon$ changes the number of distinct eigenvalues in the neighborhood. Such parameter value $\varepsilon=0$ is called an \textit{exceptional point} of the family $\{A+\varepsilon B\}_{\varepsilon\in V}$. 

The Taylor expansion $(c_1(\varepsilon))^2 - 4c_0(\varepsilon) = -4c_0^\prime(0)\varepsilon + O(\varepsilon^2)$ at $\varepsilon=0$ shows that there exists a constant $\alpha\in \mathbb C$ such that
\begin{align}
    k^{(1)}(\varepsilon) = k_0 + \alpha \varepsilon^{1/2} + O(\varepsilon), \quad k^{(2)}(\varepsilon) = k_0 - \alpha \varepsilon^{1/2} + O(\varepsilon). \label{eq:sqrt}
\end{align}
The constant $\alpha$ vanishes if and only if $c_0^\prime(0) = 0$. 

In general, the zeros $k^{(1)},\ldots,k^{(p)}$ of the polynomial $z\mapsto W(\varepsilon,z)$ can be written in the following Puiseux series \cite{nennig2020high}:
\begin{align*}
    k^{(j)}(\varepsilon) = k_0 + \alpha^{(j)}_1\varepsilon^{1/p} + \alpha^{(j)}_2\varepsilon^{2/p} + \ldots + \alpha^{(j)}_{p-1}\varepsilon^{(p-1)/p} + O(\varepsilon) \quad\text{for all }j=1,\ldots,p
\end{align*}
with some constants $\alpha^{(j)}_1,\ldots,\alpha^{(j)}_{p-1}\in\mathbb C$.

\begin{figure}
    \centering
    \includegraphics[scale=0.63]{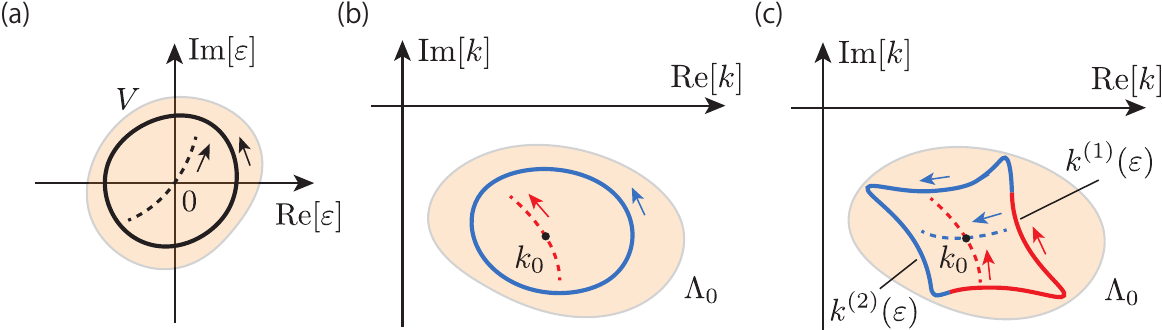}
    \caption{Schematic illustration of eigenvalues of $A+\varepsilon B$ as a function of $\varepsilon$, where $A$ has an eigenvalue $k_0$ with geometric multiplicity $m$ and algebraic multiplicity $p$. (a) Two paths (solid and dashed lines) in the neighborhood $V$. (b) and (c) Trajectory of eigenvalues in $\Lambda_0$ when $\varepsilon$ continuously changes along the paths. (b) $m=p=1$. (c) $m=1$ and $p=2$.}
    \label{fig:traj}
\end{figure}

\section{Numerical method}\label{s:numerical}
We wish to compute approximation to resonances for a given geometry $\Omega$ numerically. To this end, the boundary integral equation $(I - D_k)\varphi = 0$ is discretized into a finite-dimensional linear system $A_N(k) \Psi_N = 0$, where $A_N$ is a square matrix, such that $\Psi_N$ approximates $\varphi\in C(\partial\Omega)$ in some sense. We adopt a numerical scheme based on the Nystr\"om discretization, which is recently proposed for Dirichlet and Robin (Neumann) problems with simply-connected obstacles \cite{ma2023computation,liu2025accurate}.

\subsection{Nystr\"om method}
We use the Nystr\"om method proposed by \cite{kress1991boundary} as summarized below.
Let $\Omega^{(1)},\ldots,\Omega^{(M)}$ be subdomains of $\Omega$ such that $\Omega^{(m)}$ is simply connected for all $m=1,\ldots,m$, $\Omega^{(i)}$ and $\Omega^{(j)}$ are disjoint for all $i\neq j$, and $\Omega = \cup_{m=1}^M \Omega^{(m)}$. Then for every $m=1,\ldots,M$ there exists an embedding $\gamma^{(m)}:\mathbb R/(2\pi\mathbb Z)\to\mathbb R^2$ of class $C^2$ such that $\partial\Omega^{(m)} = \{ (\gamma^{(m)}_1(t),\gamma^{(m)}_2(t)) : t\in[0,2\pi] \}$ and its outward unit normal $\nu$ on $\partial\Omega^{(m)}$ is given by $\nu^{(m)}(t) := (\gamma^{(m)\prime}_2(t),-\gamma^{(m)\prime}_1(t))/|\gamma^{(m)\prime}(t)|$ at $\gamma^{(m)}(t)$ for each $t\in[0,2\pi]$. This parameterization allows us to rewrite the boundary integral equation $(I - D_k)\varphi = 0$ as: find $(\psi^{(1)},\ldots,\psi^{(M)})\in (C(\mathbb R/(2\pi\mathbb Z)))^M$ such that
\begin{align*}
    \psi^{(m)}(t) - \sum_{l=1}^M \int_0^{2\pi} K_{ml}(t,\tau;k) \psi^{(l)}(\tau)\mathrm d\tau = 0 \quad\text{for all }t\in[0,2\pi] \text{ and }m=1,\ldots,M,
\end{align*}
where the kernel $K_{ml}$ is given by
\begin{align*}
    K_{ml}(t,\tau;k) = \frac{\mathrm i k}{2} H^{(1)}_1(k|\gamma^{(m)}(t)-\gamma^{(l)}(\tau)|) \frac{\nu^{(l)}(\tau)\cdot(\gamma^{(m)}(t)-\gamma^{(l)}(\tau))}{|\gamma^{(m)}(t)-\gamma^{(l)}(\tau)|} |\gamma^{(l)\prime}(\tau)| 
\end{align*}
for all $t,\tau\in[0,2\pi]$ and $m,l=1,\ldots,M$. For numerical quadratures introduced later, the equation is further rewritten as
\begin{align}
\label{eq:ie}
    \psi^{(m)}(t) - &\left(\sum_{l\neq m} \int_0^{2\pi} K_{ml}(t,\tau;k) \psi^{(l)}(\tau)\mathrm d\tau + \int_0^{2\pi} \hat K_{m}(t,\tau;k) \psi^{(m)}(\tau)\mathrm d\tau \right)
    \\
    &-\int_0^{2\pi} \tilde K_{m}(t,\tau;k) \log\left(4\sin^2\frac{t-\tau}{2}\right) \psi^{(m)}(\tau)\mathrm d\tau  = 0  \notag
\end{align}
for all $t\in[0,2\pi]$ and $m=1,\ldots,M$, where the kernels $\tilde K_m$ and $\hat K_m$ are defined by
\begin{align*}
    \tilde K_m(t,\tau;k) &= \frac{\mathrm i k}{2} J_1(k|\gamma^{(m)}(t)-\gamma^{(m)}(\tau)|) \frac{\nu^{(m)}(\tau)\cdot(\gamma^{(m)}(t)-\gamma^{(m)}(\tau))}{|\gamma^{(m)}(t)-\gamma^{(m)}(\tau)|} |\gamma^{(m)\prime}(\tau)|,
\\
    \hat K_m(t,\tau;k) &= K_m(t,\tau;k) - \tilde K_m(t,\tau;k) \log\left( 4\sin^2 \frac{t-\tau}{2} \right)
\end{align*}
for all $t,\tau\in[0,2\pi]$ and $m=1,\ldots,M$. The original equation $(I-D_k)\varphi = 0$ on $C(\partial\Omega)$ and integral equation \cref{eq:ie} on $(C(\mathbb R/(2\pi\mathbb Z)))^M$ are equivalent in the sense that their unique solvability is the same and their solutions can be converted by the formula $\varphi\circ \gamma^{(m)} = \psi^{(m)}$ for all $m=1,\ldots,M$.

To construct a numerical solution of \cref{eq:ie}, we first approximate the integrals in the second term on the left-hand side of \cref{eq:ie} by the $2N$-points trapezoidal rule, i.e., 
\begin{align*}
    &\sum_{l\neq m} \int_0^{2\pi} K_{ml}(t,\tau;k) \psi^{(l)}(\tau)\mathrm d\tau + \int_0^{2\pi} \hat K_{m}(t,\tau;k) \psi^{(m)}(\tau)\mathrm d\tau 
\\
    &\simeq \sum_{l\neq m} \frac{2\pi}{N} \sum_{j=0}^{2N-1} K_{ml}(t,t_j;k) \psi^{(l)}(t_j) + \frac{2\pi}{N} \sum_{j=0}^{2N-1} \hat K_{m}(t,t_j;k) \psi^{(m)}(t_j)
\end{align*}
for each $t\in[0,2\pi]$ and $m=1,\ldots,M$, where the quadrature points are given by $t_j=j\pi/N$ ($j=0,\ldots,2N-1)$. The integral in the third term on the left-hand side of \cref{eq:ie} is approximated using the Kussmaul--Martensen rule \cite{kussmaul1969ein,martensen1963uber}, given by
\begin{align*}
    \int_0^{2\pi} \tilde K_{m}(t,\tau;k) \log\left(4\sin^2\frac{t-\tau}{2}\right) \psi^{(m)}(\tau)\mathrm d\tau \simeq \frac{2\pi}{N} \sum_{j=0}^{2N-1} R_j(t) \tilde K_m(t,t_j;k) \psi^{(m)}(t_j)
\end{align*}
for each $t\in[0,2\pi]$ and $m=1,\ldots,M$, where $R_j(t)$ is defined by 
\begin{align*}
    R_j(t) =& - \frac{\pi}{N^2} \cos N(t-t_j) -\frac{2\pi}{N} \sum_{m=1}^{N-1}\frac{1}{m}\cos m(t-t_j).
\end{align*}
We apply the Nystr\"om method based on these quadratures and solve the linear system: find $\Psi^{(1)},\ldots,\Psi^{(M)}\in \mathbb C^{2M}$ such that
\begin{align}
    \begin{bmatrix}
        A^{(1,1)}(k) & \cdots & A^{(M,1)}(k)
        \\
        \vdots & \ddots & \vdots
        \\
        A^{(1,M)}(k) & \cdots & A^{(M,M)}(k)
    \end{bmatrix}
    \begin{pmatrix}
        \Psi^{(1)} \\ \vdots \\ \Psi^{(M)}
    \end{pmatrix}
    =
    \begin{pmatrix}
        0 \\ \vdots \\ 0
    \end{pmatrix}. \label{eq:discrete}
\end{align}
For each $m,l=1,\ldots,M$, the $2N\times 2N$ matrix $A^{(m,l)}$ is defined by
\begin{align*}
    A^{(m,m)}_{ij}(k) &= \delta_{ij} - \frac{2\pi}{N} \hat K_{m}(t_i,t_j;k) - \frac{2\pi}{N}  R_j(t_i) \tilde K_m(t_i,t_j;k) \quad \text{for all }m=1,\ldots,M,
\\
    A^{(m,l)}_{ij}(k) &= -\frac{2\pi}{N} K_{ml}(t_i,t_j;k) \quad \text{for all }m,l=1,\ldots,M \text{ with }m\neq l.
\end{align*}
A solution of the linear system \cref{eq:discrete} gives approximation to a solution of the integral equation \cref{eq:ie} as $\Psi^{(m)}_i\simeq \psi^{(m)}(t_i)$ for each $m=1,\ldots,M$ and $i=0,\ldots,2N-1$. If the boundary $\partial\Omega$ is analytic, i.e., $\gamma^{(1)},\ldots,\gamma^{(M)}$ are analytic functions, then we expect that this numerical scheme is spectrally accurate \cite{kress2014linear}.

\subsection{Sakurai--Sugiura method}
Our aim is to numerically calculate eigenvalues of $k\mapsto I-D_k$, i.e., values of $k$ for which the integral equation \cref{eq:ie} admits a nontrivial solution in $(C(\mathbb R/(2\pi\mathbb Z)))^M$. Based on the Nystr\"om method, we expect that these eigenvalues are approximately given by the spectrum of $k\mapsto A_N(k)$, where $A_N(k)$ is the matrix on the right-hand side of \cref{eq:discrete}, for sufficiently large $N$.

As $A_N(k)$ is a square matrix of finite size, an eigenvalue of $k\mapsto A_N(k)$ is the value of $k$ for which the determinant of $A_N(k)$ vanishes. Such values can be numerically sought using, for example, the Newton method. In this study, however, we employ a more advanced numerical method, known as Sakurai--Sugiura method \cite{asakura2009numerical}, as it can find multiple eigenvalues in a given domain of $\mathbb C$. For the reader's convenience, we briefly summarize the algorithm of the Sakurai--Sugiura method below.

We want to find eigenvalues of the holomorphic matrix-valued function $k\mapsto A_N(k)$ from an open and connected subset $\Lambda$ of $\mathbb C$ onto $\mathcal L(\mathbb C^{2NM},\mathbb C^{2NM})$. Let $\Gamma$ be a positively oriented closed Jordan curve in $\Lambda$ and suppose that the matrix pencil $k\mapsto H^<-k H$ has $N_\Gamma$ eigenvalues (counting multiplicity) $k_1,\ldots,k_{N_\Gamma}$ inside $\Gamma$ with algebraic multiplicity less than or equal to $L_\Gamma$, where $H$ and $H^<$ are the block Hankel matrices defined by
\begin{align*}
    H = 
    \begin{bmatrix}
        \mu_0 & \mu_1 & \cdots & \mu_{N_\Gamma-1}
        \\
        \mu_1 & \mu_2 & \cdots & \mu_{N_\Gamma}
        \\
        \vdots & \vdots & \ddots & \vdots
        \\
        \mu_{N_\Gamma-1} & \mu_{N_\Gamma} & \cdots & \mu_{2N_\Gamma-2}
    \end{bmatrix}
    ,\quad
    H^< = 
    \begin{bmatrix}
        \mu_1 & \mu_2 & \cdots & \mu_{N_H}
        \\
        \mu_2 & \mu_3 & \cdots & \mu_{N_H+1}
        \\
        \vdots & \vdots & \ddots & \vdots
        \\
        \mu_{N_H} & \mu_{N_H+1} & \cdots & \mu_{2N_H-1}
    \end{bmatrix}
    ,
\end{align*}
and $\mu_j$ are defined by
\begin{align*}
    \mu_j = \int_\Gamma k^j \mathcal U^H A_N^{-1}(k) \mathcal V \mathrm dk
\end{align*}
with $2NM$-by-$L_\Gamma$ full column-rank matrices $\mathcal U$ and $\mathcal V$. Then $A_N$ has eigenvalues at $k_1,\ldots,k_{N_\Gamma}$. Under some assumptions on $\mathcal U$ and $\mathcal V$, the converse also holds, i.e., $k_1,\ldots,k_{N_\Gamma}$ are eigenvalues of $k\mapsto H^<-k H$ if they are eigenvalues of $A_N$. This equivalence allows us to compute eigenvalues of $A_N$ by solving the generalized eigenvalue problem for $(H^<,H)$, which can easily be solved by standard methods in numerical linear algebra.

\section{Numerical results}\label{s:result}
In this section, we present some numerical results suggesting the existence of a defective resonance in the exterior Neumann problem. Throughout this section, we set the scatterer $\Omega$ as the union of a finite number of open disks given by
\begin{align*}
    \Omega = \left(\bigcup_{j=1}^{20} \mathcal B(c_{1j};R_1) \right) \cup \left(\bigcup_{j=1}^{20} \mathcal B(c_{2j};R_2) \right),
\end{align*}
where $R_1$ and $R_2\in(0,1/2)$ are constants with $c_{1j} := (j-1,1/2)\in\mathbb R^2$ and $c_{2j} := (j-1,-1/2)\in\mathbb R^2$ for all $j=1,\ldots,20$. This geometry is schematically shown in \cref{fig:geometry}. 
\begin{figure}
    \centering
    \includegraphics[scale=0.63]{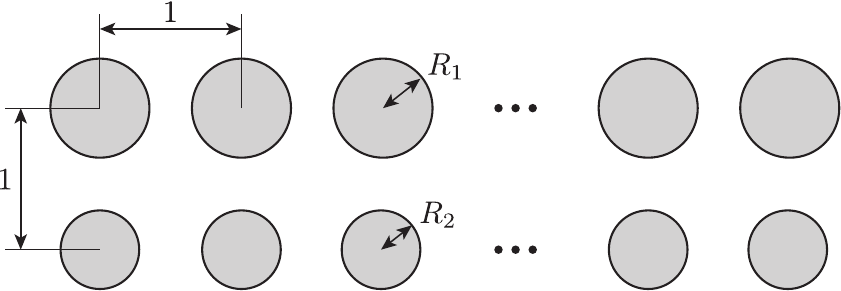}
    \caption{$20\times 2$ disks aligned on the two-dimensional Cartesian grid. The top and bottom disks have the radii $R_1$ and $R_2$, respectively.}
    \label{fig:geometry}
\end{figure}

Each line array of the sound-hard scatterers supports guided or leaky waves, called the Rayleigh--Bloch waves \cite{matsushima2024tracking}. We expect that two Rayleigh--Bloch waves propagating along top and bottom arrays interact with each other and yield a defective resonance. This mechanism is discussed in Appendix A using a simplified model. While the Rayleigh--Bloch wave theory is valid in purely periodic systems with infinite length, it provides some analogy between infinite and finite arrays \cite{thompson2008new,chaplain2025acoustic}.

\subsection{Finding a degenerate resonance via optimization}
To seek a defective resonance, we find $(R_1,R_2)$ such that two simple resonances $k_1$ and $k_2$ coalesce at a single point in the complex $k$ plane. This is formulated as the minimization of the modulus $|k_1(R_1,R_2)-k_2(R_1,R_2)|$ in terms of the two real variables $R_1$ and $R_2$ \cite{matsushima2023exceptional}. We numerically seek a minimizer using the Nelder--Mead method, a well-known gradient-free optimization algorithm.

%
%
The algorithm found two resonances
\begin{align*}
    k^{(1)} &:= 4.66118535 - 0.38585151 \mathrm i
\\
    k^{(2)} &:= 4.66118551 - 0.38585143 \mathrm i
\end{align*}
at $(R_1,R_2) = (0.3500359767278,0.2777547442949)$.
The number of quadrature points is set to $N=50$. Each resonance has a single unique eigenvector. The two resonances are almost degenerate around the point $k_0 := 4.661185 - 0.385851 \mathrm i$. We assume that these resonances are in fact degenerate and estimate the geometric multiplicity numerically. This can be done by calculating the column rank of the matrix $[\Psi^{(1)},\Psi^{(2)}]$, where $\Psi^{(1)}$ and $\Psi^{(2)}$ respectively denote eigenvectors of the discretized system corresponding to $k^{(1)}$ and $k^{(2)}$. The rank is equal to the number of nonzero singular values of the matrix. 
We calculated two singular values $\sigma_1\geq \sigma_2$ and obtained the value $\sigma_2/\sigma_1 \simeq 6.43\times 10^{-7}$. 
This suggests that the two eigenvectors are almost linearly dependent, i.e., the geometric multiplicity of the degenerate resonance $k_0$ is one.

\begin{figure}
    \centering
    \includegraphics[width=1\linewidth]{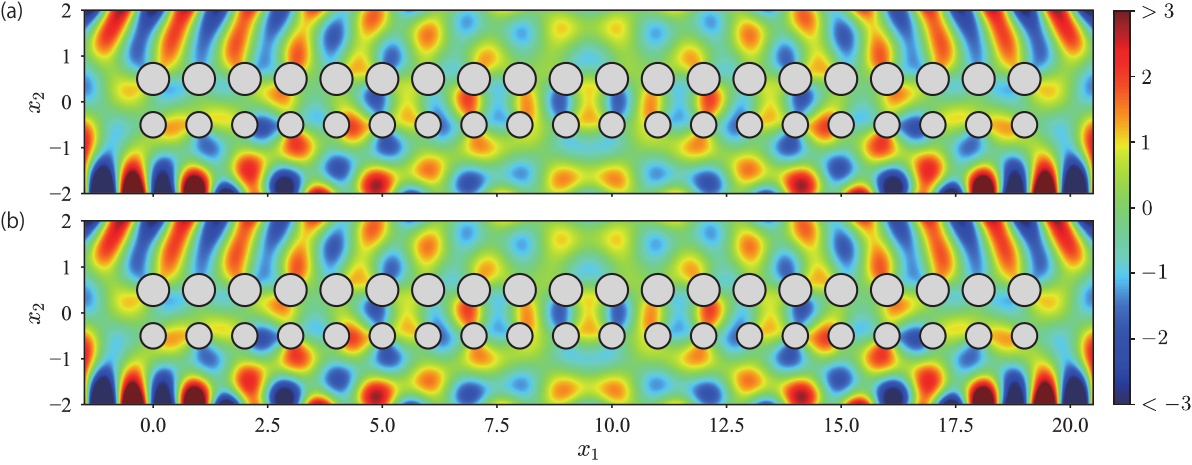}
    \caption{(a) and (b) The real part of resonant states $u^{(1)}$ and $u^{(2)}$ corresponding to the two resonances $k^{(1)}$ and $k^{(2)}$, respectively. The values are scaled such that $u^{(1)}(9.5,0)=u^{(2)}(9.5,0)=1$.}
    \label{fig:result-naiten}
\end{figure}
In addition, we calculated their corresponding resonant states and plotted their distribution in \cref{fig:result-naiten}. This result clearly indicates that the two resonant states are identical up to constant factors.

\begin{table}[]
\centering
\caption{Values of two resonances $k^{(1)}$ and $k^{(2)}$ for various $N$.}
\label{tab:result-change-N}
\begin{tabular}{lllll}
$N$  & $\mathrm{Re}[k^{(1)}]$ & $\mathrm{Im}[k^{(1)}]$ & $\mathrm{Re}[k^{(2)}]$ & $\mathrm{Im}[k^{(2)}]$ \\
\hline
$50$ & $4.661185359$ & $-0.385851510$ & $4.661185516$ & $-0.385851429$
\\
$70$ & $4.661185369$ & $-0.385851498$ & $4.661185506$ & $-0.385851441$
\\
$90$ & $4.661185358$ & $-0.385851508$ & $4.661185516$ & $-0.385851432$
\end{tabular}
\end{table}
To confirm that the degeneracy is not a numerical artifact originating from the discretization, we changed the discretization parameter $N$ and calculated resonances around the point $k_0$. The results are shown in \cref{tab:result-change-N}. We find that two resonances exist and they are almost degenerate around $k_0$ with modulus $|k^{(1)}-k^{(2)}|$ less than $2\times 10^{-7}$. This suggests that the degeneracy is not a spurious one due to the discretization. In the following computation, the parameter is fixed at $N=50$.

%
%
\subsection{Length of Jordan chains}
To confirm that the resonance $k_0$ is degenerate with the algebraic multiplicity greater than one, we numerically estimate the maximum length of Jordan chains at $k_0$.

Let $\varphi_0\in C(\partial\Omega)$ be the unique eigenvector associated with the degenerate resonance $k_0$. By definition, the pair $(\varphi_0,\varphi_1)$ is a Jordan chain if and only if $\varphi_1\in C(\partial\Omega)$ solves
\begin{align}
    (I - D_{k_0})\varphi_1 = D_{k_0}^\prime \varphi_0, \label{eq:Jordan}
\end{align}
where $D_k^\prime\in\mathcal L(C(\partial\Omega),C(\partial\Omega))$ is the derivative of $k\mapsto D_k$. Let $\langle \cdot,\cdot\rangle:C(\partial\Omega)\times C(\partial\Omega)\to\mathbb C$ be the nondegenerate bilinear form defined by
\begin{align*}
    \langle \varphi,\psi\rangle = \int_{\partial\Omega} \varphi(x)\psi(x)\mathrm ds \quad\text{for all }\varphi,\psi\in C(\partial\Omega).
\end{align*}
Then the boundary integral operator $D_{k_0}$ has a unique adjoint operator $D_{k_0}^*$ with respect to $\langle \cdot,\cdot\rangle$, given by
\begin{align*}
    (D^*_{k_0} \varphi)(x) = \int_{\partial\Omega} \frac{\partial \Phi}{\partial \nu(x)}(x,y;k_0)\varphi(y) \mathrm ds(y) \quad\text{for all }\varphi\in C(\partial\Omega), x\in\partial\Omega.
\end{align*}
By the Fredholm alternative, the equation \cref{eq:Jordan} is solvable if and only if 
\begin{align*}
    \langle D_{k_0}^\prime \varphi_0, \psi \rangle = 0 \quad\text{for all }\psi\in N(I-D_{k_0}^*).
\end{align*}
This condition can be tested numerically by computing an eigenvector of $k\mapsto I-D_k^*$ and $D^\prime_{k_0}\varphi_0$. We apply the same Nystr\"om method and compute an approximation of $\langle D_{k_0}^\prime \varphi_0, \psi \rangle$.

We first checked the norm of $D_{k_0}^\prime \varphi_0$ and obtained $\|D_{k_0}^\prime \varphi_0\|/\| \varphi_0\| \simeq 2.34\times 10^2$, implying $D_{k_0}^\prime \varphi_0\neq 0$. In addition, we calculated the value of $|\langle D_{k_0}^\prime \varphi_0/\| \varphi_0\|, \psi/\|\psi\| \rangle|$ for the degenerate resonance at $k_0$ and another simple resonance at $k_1:=4.697549 -0.391284\mathrm i$ and obtained
\begin{align*}
    |\langle D_{k_0}^\prime \varphi_0/\| \varphi_0\|, \psi/\|\psi\| \rangle| \simeq 8.57 \times 10^{-6} \quad \text{and}\quad |\langle D_{k_1}^\prime \varphi_0/\| \varphi_0\|, \psi/\|\psi\| \rangle| \simeq 3.61.
\end{align*}
The degenerate resonance $k_0$ has a much smaller value of $\langle D_{k_0}^\prime \varphi_0, \psi \rangle$ in comparison with the other simple resonance. This suggests that the equation \cref{eq:Jordan} is solvable, i.e. the algebraic multiplicity of $k_0$ is greater than one.

%
%
\subsection{Exceptional point}
As we discussed in Section 3(b), the existence of a defective resonance induces an exceptional point of a family of operator-valued functions. To see this numerically, let us consider a compact operator $B:C(\partial\Omega)\to C(\partial\Omega)$ given by
\begin{align*}
    (B\varphi)(x) = \int_{\partial\Omega} \sin|x-y| \varphi(y) \mathrm ds(y) \quad\text{for all }\varphi\in C(\partial\Omega), x\in\partial\Omega.
\end{align*}

\begin{figure}
    \centering
    \includegraphics[width=0.95\linewidth]{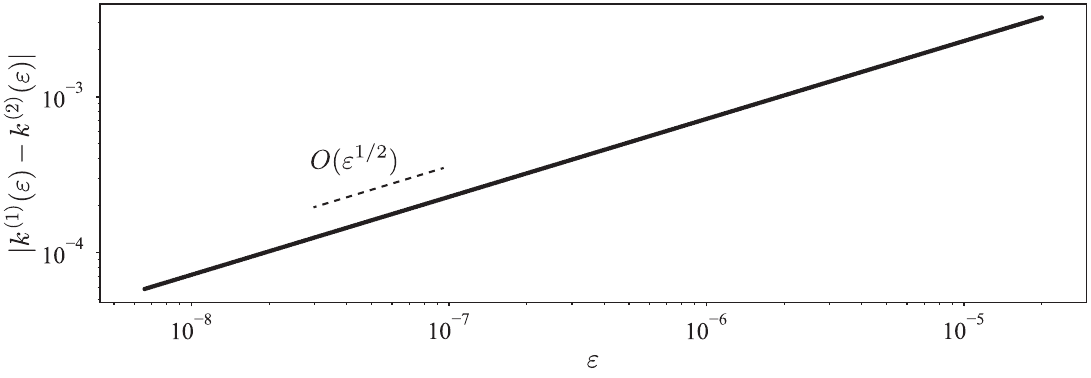}
    \caption{Distance between the two resonances $k^{(1)}(\varepsilon)$ and $k^{(2)}(\varepsilon)$ as a function of $\varepsilon$.}
    \label{fig:camphor-sqrt_result}
\end{figure}
We calculated two eigenvalues $k^{(1)}(\varepsilon)$ and $k^{(2)}(\varepsilon)$ of $A+\varepsilon B$ around $k_0$ for varying $\varepsilon$ on the real axis and plotted their distance in \cref{fig:camphor-sqrt_result}. This result implies that the distance $|k^{(1)}(\varepsilon) - k^{(2)}(\varepsilon)|$ changes at the rate of $\varepsilon^{1/2}$, which is consistent with the formula \cref{eq:sqrt}.

This square root characteristics becomes more evident when the parameter $\varepsilon$ varies along a closed path encircling the point $\varepsilon=0$. As shown in \cref{fig:camphor-encircle_result} (a), we consider the path $\varepsilon=2\times 10^{-5}\mathrm e^{\mathrm i\theta}$ with $\theta$ changing from $0$ to $2\pi$. The corresponding eigenvalues of $A+\varepsilon B$ are calculated and plotted in \cref{fig:camphor-encircle_result} (b). The two eigenvalues continuously move on the complex $k$-plane counterclockwise and switch their positions once a single encircling (from $\theta=0$ to $2\pi$ in the figure) is completed. This unique behavior, known as \textit{exceptional point encircling} \cite{dembowski2004encircling}, suggests that the perturbed eigenvalues obey the square-root asymptotics \cref{eq:sqrt} and also $\varepsilon=0$ is an exceptional point of the family $\{ A+\varepsilon B\}_\varepsilon$.

\begin{figure}
    \centering
    \includegraphics[width=0.95\linewidth]{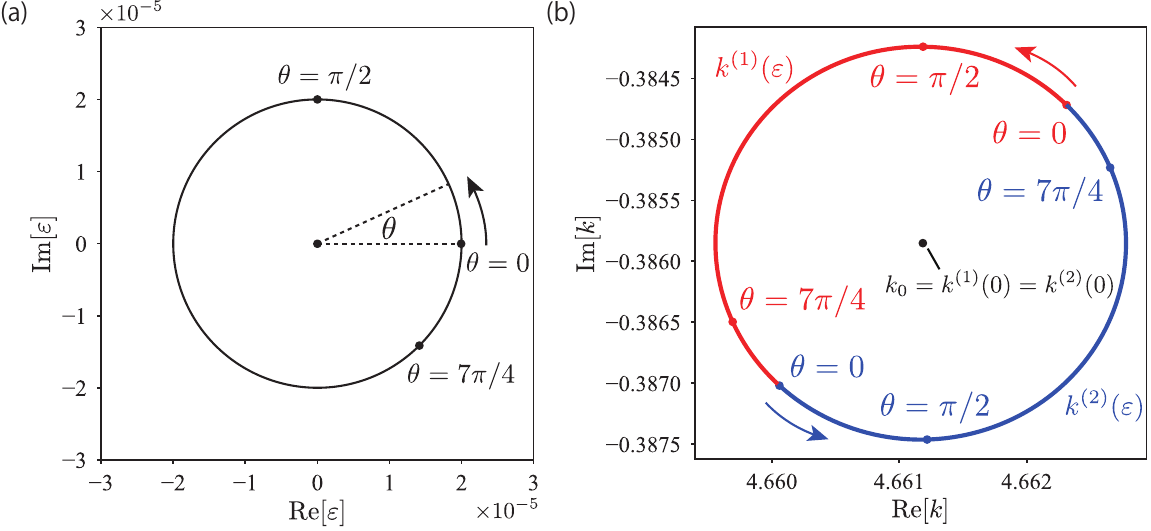}
    \caption{(a) Positively oriented circle of radius $2\times 10^{-5}$ centered at the origin. (b) Trajectories of the two resonances $k^{(1)}(\varepsilon)$ and $k^{(2)}(\varepsilon)$ when $\varepsilon$ moves along the circle.}
    \label{fig:camphor-encircle_result}
\end{figure}

\section{Concluding remarks}\label{s:conclusion}
In this paper, we studied resonances in an acoustic scattering problem with the sound-hard boundary condition. A boundary integral formulation allowed us to employ perturbation theory for holomorphic Fredholm operator-valued functions, which reveals some unique characteristics of defective resonances. Some numerical results based on the Nystr\"om method showed strong evidence of the existence of a defective resonance in the sound-hard system. This suggests that conventional high-contrast material properties are not essential for achieving an exceptional point. 

Although our numerical scheme is valid only for the two-dimensional setting, the theoretical analysis on operator-valued functions can be applied to higher-dimensional cases. One future direction of this work is to check whether defective resonances still exist in three-dimensional scattering systems. In addition, we expect that other numerical techniques, including finite element methods with the Dirichlet-to-Neumann map \cite{xi2024analysis}, are useful for the analysis of defective resonances. We also mention that the approach based on the Gohberg--Sigal theory \cite{ammari2021high} may be applicable to our problem.

Note that the physical origin of the obtained degeneracy and exceptional point is unclear. The analysis on a simplified model in Appendix A suggests the connection between the existence of exceptional points and the Bloch band structure in a periodic system. More detailed analysis and discussion are expected in future work. If this is the case, the mechanism of the defective degeneracy can be fundamentally different from that in parity--time-symmetric scattering systems \cite{ammari2022exceptional}. The code used to conduct this study is available at \cite{zenodo}.

\section*{Acknowledgments}
K.M. is supported by JSPS KAKENHI (JP24K17191, JP23H03798, JP23H03413) and Mizuho Foundation for the Promotion of Sciences.
This work is supported by ``Joint Usage/Research Center for Interdisciplinary Large-scale Information Infrastructures (JHPCN)” in Japan (Project IDs: jh240031 and jh250045). In this research work we used the supercomputer of ACCMS, Kyoto University. The authors acknowledge S. Kashimura for his contribution to the numerical results presented in this paper and K. Morishita for helpful advice.

\appendix

\section{Exceptional point in a mechanical system}\label{appendix}
\begin{figure}
    \centering
    \includegraphics[width=1\linewidth]{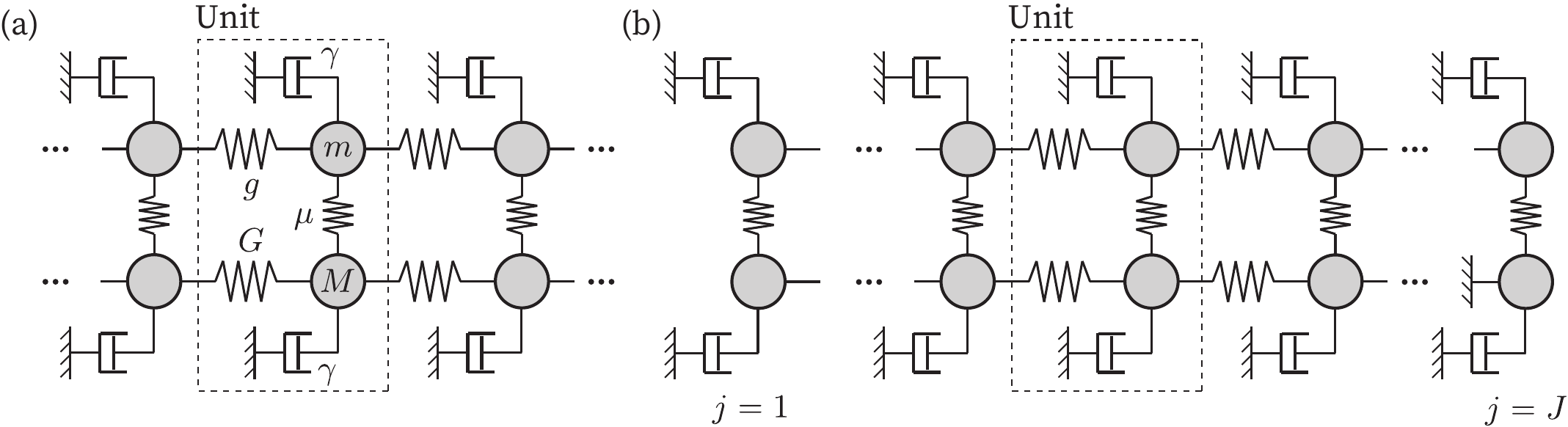}
    \caption{The infinite-length chain (a) has a unit component with two point masses, three springs, and two dampers. The finite-length chain (b) has the same unit component.}
    \label{fig:mechanical-chain}
\end{figure}
As shown in \cref{fig:mechanical-chain} (a), let us consider an infinite-length chain of point masses $m$ and $M$ connected by springs with positive constants $g$, $G$ and $\mu$. In addition, each point mass is attached to a damper with constant $\gamma>0$. 
We consider the horizontal displacement of each mass in the chain under the time-harmonic assumption with angular frequency $\omega$. Let $u$ and $U$ be the displacement of the point mass $m$ and $M$ in a unit, respectively. They satisfy the following linear system:
\begin{align*}
    \begin{bmatrix}
        2g(1-\cos\beta)+\mu-\mathrm i\omega\gamma-m\omega^2 & -\mu
        \\
        -\mu & 2G(1-\cos\beta)+\mu-\mathrm i\omega\gamma-M\omega^2
    \end{bmatrix}
    \begin{pmatrix}
        u \\ U
    \end{pmatrix}
    =
    \begin{pmatrix}
        0 \\ 0
    \end{pmatrix}
    ,
\end{align*}
where $\beta\in[-\pi,\pi]$ is the wavenumber in the horizontal direction. This quadratic eigenvalue problem is equivalent to the following linear eigenvalue problem:
\begin{align*}
    \mathcal T(\beta)
    \begin{pmatrix}
        u \\ U \\ \omega u \\ \omega U
    \end{pmatrix}
    =
    \omega
    \begin{pmatrix}
        u \\ U \\ \omega u \\ \omega U
    \end{pmatrix}
    ,
\end{align*}
where $\mathcal T(\beta)$ is the four-by-four matrix given by
\begin{align*}
    \mathcal T(\beta) =
    \begin{bmatrix}
        1 & & & 
    \\
          & 1 & &
    \\
        & & 1/m &
    \\
        & & & 1/M
    \end{bmatrix}
    \begin{bmatrix}
         &  & 1 & 
    \\
         &  &  & 1
    \\
        2g(1-\cos\beta)+\mu & -\mu & -\mathrm i\gamma & 
    \\
        -\mu & 2G(1-\cos\beta)+\mu &  & -\mathrm i\gamma
    \end{bmatrix}
    .
\end{align*}
Therefore, the Bloch band of the periodic chain is given by
\begin{align*}
    \bigcup_{\beta\in[-\pi,\pi]} \sigma(\mathcal T(\beta)).
\end{align*}
An important observation is that the square matrix $\mathcal T(\beta)$ may possess defective eigenvalues for certain parameters. For example, let us consider the following set of parameters: $m=g=1$, $M=2$, $\gamma=8/3$, $\mu=4\sqrt{5}/9$, and $G=2(81+\sqrt{5})/149$. In the case of $\beta=\cos^{-1} \frac{2\sqrt 5-4}{9}$, the matrix $\mathcal T(\beta)$ has two defective eigenvalues at $\pm 1-\mathrm i$ with the Jordan decomposition
\begin{align*}
    \mathcal T(\beta) = 
    \mathcal S
\begin{bmatrix}
-1 - \mathrm i & 1 & 0 & 0\\
0 & -1 - \mathrm i & 0 & 0\\
0 & 0 & 1 - \mathrm i & 1\\
0 & 0 & 0 & 1 - \mathrm i
\end{bmatrix}
\mathcal S^{-1},\quad
\mathcal S := 
\begin{bmatrix}
\displaystyle\frac{1+2i}{\sqrt{5}} & \displaystyle-\frac{9-7i}{2\sqrt{5}} & \displaystyle-\frac{1-2i}{\sqrt{5}} & \displaystyle-\frac{9+7i}{2\sqrt{5}}\\[4pt]
\displaystyle-\frac{1-i}{2}       & \displaystyle\frac{i}{2}            & \displaystyle\frac{1+i}{2}         & \displaystyle-\frac{i}{2}\\[4pt]
\displaystyle\frac{1-3i}{\sqrt{5}} & \displaystyle\frac{9+3i}{\sqrt{5}} & \displaystyle\frac{1+3i}{\sqrt{5}} & \displaystyle-\frac{9-3i}{\sqrt{5}}\\[4pt]
1 & 0 & 1 & 0
\end{bmatrix}.
\end{align*}
The Bloch band is plotted in \cref{fig:result-analogy_result} (a). This suggests that the defective degeneracy of the eigenvalues $\omega=\pm 1-\mathrm i$ is associated with the crossing of two continuous dispersion curves in the three-dimensional $(\beta,\omega)$ space.

\begin{figure}
    \centering
    \includegraphics[width=1\linewidth]{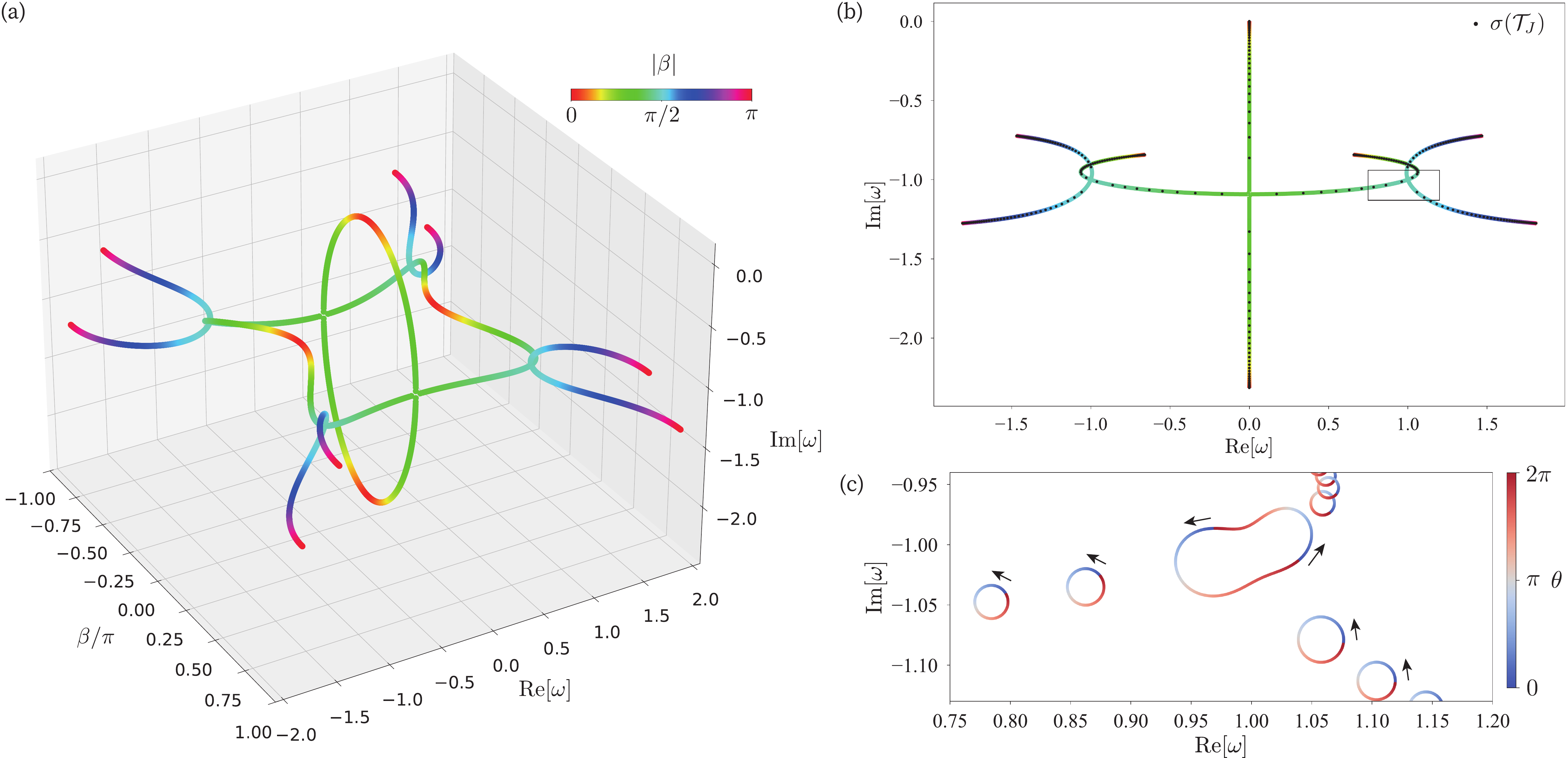}
    \caption{Spectra of the matrices $\mathcal T(\beta)$, $\mathcal T_J$, and $\mathcal T_J^\varepsilon(\theta)$. The parameters are set to $m=g=1$, $M=2$, $\gamma=8/3$, $\mu=4\sqrt{5}/9$, $G=2(81+\sqrt{5})/149$, and $J=80$. (a) Spectrum $\sigma(\mathcal T(\beta))\subset \mathbb C$ for varying $\beta\in[-\pi,\pi]$. (b) Comparison of $\cup_{\beta\in[-\pi,\pi]}\,\sigma(\mathcal T(\beta))$ (continuous line) and $\sigma(\mathcal T_J)$ (dots). (c) Trajectory of some eigenvalues of $\mathcal T^\varepsilon_{J}(\theta)$ for $\varepsilon=8\times 10^{-3}$ and $\theta\in[0,2\pi]$. The arrows indicate the direction of increasing $\theta$.}
    \label{fig:result-analogy_result}
\end{figure}
We show that the defective degeneracy of eigenvalues of $\mathcal T(\beta)$ is associated with the existence of exceptional points in finite-length systems with open boundary conditions. For instance, let us consider the finite-length chain shown in \cref{fig:mechanical-chain} (b). The finite-length system is characterized by eigenvalues of the $4J$-by-$4J$ matrix
\begin{align*}
    \mathcal T_J =
    \begin{bmatrix}
        I & O
        \\
        O & \mathcal M_J
    \end{bmatrix}^{-1}
    \begin{bmatrix}
        O & I 
        \\
        \mathcal K_J & -\mathrm i\gamma I
    \end{bmatrix},
\end{align*}
where $J$ is the length of the chain and the $2J$-by-$2J$ blocks $\mathcal M_J$ and $\mathcal K_J$ are defined by
\begin{align*}
    \mathcal M_J &= \mathrm{diag}(m,m,\ldots,m,M,M,\ldots,M)
    ,
    \\
    \mathcal K_J &=
    \left[
    \begin{array}{@{}ccccc|ccccc@{}}
        2g+\mu & -g &   &   &   & -\mu & & & &
    \\
        -g & 2g+\mu & -g &  &  &  & -\mu & & &
    \\
         & -g & 2g+\mu & -g &  &  &  & -\mu & &
    \\
          &    &        & \ddots & &   &    &        & \ddots &
    \\
          &  &  & -g & 2g+\mu & & & & & -\mu
    \\
    \hline
        -\mu & & & & & 2G+\mu & -G &   &   &    
    \\
         & -\mu & &  & & -G & 2G+\mu & -G &  &   
    \\
          &  &   -\mu  & & & & -G & 2G+\mu & -G & 
    \\
          &    &        & \ddots & &   &    &        & \ddots &
    \\
          & & & &  -\mu & &  &  & -G & 2G+\mu 
    \\
    \end{array}
    \right].
\end{align*}
\cref{fig:result-analogy_result} (b) shows the distribution of eigenvalues of $\mathcal T_J$ with $J=80$. The result indicates that the eigenvalues almost lie on the continuous line of the Bloch band $\cup_{\beta\in[-\pi,\pi]}\,\sigma(\mathcal T(\beta))$. 

Let us consider a perturbation of the matrix $\mathcal T_J$ and define
\begin{align*}
    \mathcal T_J^\varepsilon(\theta) =
    \mathcal T_J + 
    \begin{bmatrix}
        O & \varepsilon \mathrm e^{\mathrm i\theta} I 
        \\
        O & O
    \end{bmatrix}
\end{align*}
for $\varepsilon>0$ and $\theta\in \mathbb R$. We numerically calculated some eigenvalues of $\mathcal T_J^\varepsilon(\theta)$ with $\varepsilon=8\times 10^{-3}$ for varying $\theta$ and plotted their trajectory in \cref{fig:result-analogy_result} (c). We observe that the trajectory around the point $1-\mathrm i$ is formed by two distinct eigenvalues, implying the existence of an exceptional point of the family $\{ \mathcal T_J^\varepsilon(\theta) \}_{\varepsilon,\theta}$ inside the circle. These results suggest a possible link between the defective degeneracy of Bloch bands in purely periodic systems and the existence of exceptional points in finite-length systems with open boundaries.


\end{document}